\newtheorem{assumption}{Assumption}
\theoremstyle{plain}
\newtheorem{theorem}{Theorem}[section]
\theoremstyle{plain}
\theoremstyle{plain}
\theoremstyle{definition}
\theoremstyle{plain}
\theoremstyle{plain}
\theoremstyle{plain}
\newtheorem{remark}[theorem]{Remark}
\theoremstyle{definition}
\newcounter{stepskim}
\newcommand{\stepkim}[1][\empty]
{\stepcounter{stepskim}%
 \par
 \hangindent=4em
 \hangafter=1
 \makebox[4em][l]{\textit{Step \arabic{stepskim}:}}%
 \ifx#1\empty\else #1 --\fi
}
{\setcounter{stepskim}{0}%
 \let\item=\step
 \parskip=0.25\baselineskip
 \parindent=0pt}%
{\par}
\newcommand{\RR}{\mathbb{R}}
\newcommand{\mDefFunction}[3]{#1: #2 \rightarrow #3}
\newcommand{\mDef}{\coloneqq}
\newcommand{\mDistribution}[1]{\mathcal Q_{#1}}
\newcommand{\mExpectation}[1]{\mathbb E\left[#1\right]}
\newcommand{\mExpectationSmall}[1]{\mathbb E [#1]}
\newcommand{\mExpectationExp}[2]{\mathbb{E}_{#1}\left[#2\right]}
\newcommand{\mExpectationExpSmall}[2]{\mathbb{E}_{#1}[#2]}
\newcommand{\mData}{\mathbb D}
\newcommand{\mOnes}[1]{\mathbb 1_{#1}}
\newcommand{\mKdim}{\mathrm{dim}_K}
\newcommand{\mEdim}{\mathrm{dim}_E}
\newcommand{\mPa}{\theta}
\newcommand{\mV}[3]{V^{#1}_{{#2},{#3}}}
\newcommand{\mR}{R}
\newcommand{\mPo}{\mUpredOpt}
\newcommand{\mRe}[1]{{\Delta}_{#1}}
\newcommand{\mtRe}[1]{\tilde {\Delta}_{#1}}
\newcommand{\mCre}{\mathrm{CR}}
\newcommand{\mMpcCost}[2]{J^#1_#2}
\newcommand{\mUpred}{\bar u}
\newcommand{\mUpredOpt}{\bar u^*}
\newcommand{\mXpred}{\bar x}
\newcommand{\XX}{\mathbb X}
\newcommand{\UU}{\mathbb U}
\newcommand{\NN}{\mathcal {N}}
\newcommand{\OO}{\mathcal{O}}
\newcommand{\BB}{\mathcal{B}}
\renewcommand{\BB}{\mathcal{B}}
\newcommand*{\END}{\hfill\ensuremath{\lhd}}
\newcommand{\re}{\color{black}}
\tikzstyle{block} = [draw, rectangle,
\tikzstyle{sum} = [draw, fill=blue!20, circle, node distance=1cm]
\tikzstyle{input} = [coordinate]
\tikzstyle{output} = [coordinate]
\tikzstyle{pinstyle} = [pin edge={to-,thin,black}]
\newif\ifmygrid@coordinates
\tikzset{/mygrid/step line/.style={line width=0.80pt,draw=gray!80},
	/mygrid/steplet line/.style={line width=0.25pt,draw=gray!80}}
\def\mygrid@def@coordinates(#1,#2)(#3,#4){%
	\def\mygrid@xlo{#1}%
	\def\mygrid@xhi{#3}%
	\def\mygrid@ylo{#2}%
	\def\mygrid@yhi{#4}%
}
\newcommand\DrawGrid[3][]{%
	\pgfkeys{/mygrid/.cd,coordinates=true,step=1,steplet=0.2,#1}%
	\draw[/mygrid/steplet line] #2 grid[step=\mygrid@steplet] #3;
	\draw[/mygrid/step line] #2 grid[step=\mygrid@step] #3;
	\mygrid@def@coordinates#2#3%
	\ifmygrid@coordinates%
		\draw[/mygrid/step line]
		\foreach \xpos in {\mygrid@xlo,...,\mygrid@xhi} {%
				(\xpos,\mygrid@ylo) -- ++(0,-3pt)
				node[anchor=north] {$\xpos$}
			}
		\foreach \ypos in {\mygrid@ylo,...,\mygrid@yhi} {%
				(\mygrid@xlo,\ypos) -- ++(-3pt,0)
				node[anchor=east] {$\ypos$}
			};
	\fi%
}
\newtheorem{lemma}{Lemma}
\begin{document}
\title{Cautious Bayesian MPC: Regret Analysis and Bounds on the Number of Unsafe Learning Episodes}
\author{Kim P.~Wabersich and Melanie~N.~Zeilinger
\thanks{This work was supported by the Swiss National Science Foundation under grant no. PP00P2 157601/1.}
\thanks{K. P. Wabersich and M. N. Zeilinger are with the Institute for Dynamical
Systems and Control ETH Zurich, Zurich,
Switzerland (e-mail: \{wkim$\vert$mzeilinger\}@ethz.ch). }}

\maketitle

\begin{abstract}
  This paper investigates the combination of model predictive control (MPC) concepts and posterior sampling techniques and proposes a simple constraint tightening technique to introduce cautiousness during explorative learning episodes. The provided theoretical analysis in terms of cumulative regret focuses on previously stated sufficient conditions of the resulting `Cautious Bayesian  MPC' algorithm and shows Lipschitz continuity of the future reward function in the case of linear MPC problems. In the case of nonlinear MPC problems, it is shown that commonly required assumptions for nonlinear MPC optimization techniques provide sufficient criteria for model-based RL using posterior sampling. Furthermore, it is shown that the proposed constraint tightening implies a bound on the expected number of unsafe learning episodes in the linear and nonlinear case using a soft-constrained MPC formulation. The efficiency of the method is illustrated using numerical examples.
\end{abstract}

\begin{IEEEkeywords}
Constrained control, NL predictive control, Predictive control for linear systems, Reinforcement learning (RL)
\end{IEEEkeywords}
\section{INTRODUCTION}

Driven by a constantly increasing research and development effort in the field of autonomous systems, including, e.g., autonomous driving, service robotics, or production processes in chemical or biological industry branches, the number of constrained control problems is growing steadily, which motivates research efforts towards automated and efficient synthesis procedures of high-performance control algorithms.
While significant progress in this context has been made in the area of reinforcement learning (RL) comparably few methods address systems with continuous state and input spaces and can take into account system constraints. In addition they often require up to hundreds of thousands of rollouts, see, e.g., \cite{Wei2021,achiam2017constrained}, prohibiting their direct application to most real-world control engineering applications.
In addition to related learning problems for general nonlinear systems, there exist a variety of results for the linear quadratic regulator (LQR) problem, see, e.g.,  \cite{Ouyang2020} for the case of known stage cost and infinite horizon or \cite[Appendix D]{Osband2014} for finite-time horizons. While LQR-based techniques can provide an efficient learning procedure, a principled incorporation of state and input constraints can involve a significant amount of manual cost shaping. 
Learning-based control design for constrained systems has also been approached from a control engineering perspective. Particularly in the case of safety-critical control problems, model predictive control (MPC) techniques~\cite{borrelli2003constrained} have shown significant impact in both, industrial and research-driven applications. Due to its principled controller synthesis procedures and professional software tools~\cite[Chapter 8]{Rawlings2017}, MPC offers an important framework for a variety of learning-based control tasks~\cite{hewing2020}, including larger-scale systems. However, it should be noted that challenging control problems can be difficult to formulate in the form of a learning-based MPC policy, e.g., in the case of sparse or discontinuous dynamics and rewards.
Since MPC heavily relies on a sufficiently accurate prediction and reward model of the system, research in learning-based MPC mainly focuses on automatically improving the model quality. To this end, common methods either rely on available system data or include an exploration mechanism similar to RL, see, e.g.~\cite[Section 3]{hewing2020}, for an overview.
In addition, recent learning strategies include the optimization of model and cost function parameters of an MPC using automatic differentiation~\cite{NIPS2018_8050}, sensitivity analysis~\cite{Gros2019}, or Bayesian optimization \cite{Neumann-Brosig2019}. While the underlying concepts are promising, their theoretical properties still need to be investigated.
%
 A concept for combining the merits of posterior sampling based RL~\cite{Osband2014} and learning-based MPC~\cite{hewing2020} has been introduced in a Bayesian MPC scheme~\cite{Wabersich2020}, enabling practical and scalable RL for industrial applications with state and input constraints. However, the fundamental regularity assumption~\cite[Section 6.1]{Osband2014} required to obtain rigorous regret bounds was not established. Compared with constrained RL techniques~\cite{Liu2021,Efroni2020}, the Bayesian MPC approach in \cite{Wabersich2020} is further missing bounds on the number of unsafe learning episodes.
\paragraph*{Contributions}

  In order to complete the Bayesian MPC theory and justify its central assumption~\cite[Assumption 1]{Wabersich2020} to obtain cumulative regret bounds, Section~\ref{subsec:regularity_linear_concave} provides a theoretical analysis based on multiparametric optimization theory in the case of linear system dynamics, convex constraints, and linear or concave quadratic rewards. Then, Section~\ref{subsec:regularity_nonlinear} extends the analysis to nonlinear MPC by connecting~\cite[Assumption~1]{Wabersich2020} with common assumptions from nonlinear MPC and shows that these assumptions are also sufficient for nonlinear Bayesian MPC.
  The presented cautious Bayesian MPC formulation uses a simple state constraint tightening as introduced in Section~\ref{subsec:soft_mpc} that allows to rigorously relate the expected number of unsafe learning episodes to the cumulative performance regret bound in Section~\ref{subsec:unsafe_learning_episodes}, recovering similar bounds on the number of unsafe learning episodes as derived, e.g., in the context of RL~\cite{Wei2021,Liu2021,Efroni2020}. Compared to the original formulation in \cite{Wabersich2020}, this guarantees fewer constraint violations as the learning process advances while maintaining the fast practical learning convergence due to the MPC policy parametrization.


\subsection{Preliminaries}
  \emph{Notation:} We denote the $i$-th element of a vector $c\in\RR^n$ as $c_i$.
  A vector with $n$ elements equal to $1$ is denoted by $\mOnes{n}$ and if it is clear from the context
  we write $\mOnes{}$. \END

  We consider dynamical systems that can be modeled as
  \begin{align}\label{eq:transition}
    x(t+1) &= f(x(t),u(t);\mPa_{F})+\epsilon_{F}(t),
      \qquad t\in\mathbb N
  \end{align}
  with parameters $\mPa_F$, $\sigma_F$-sub-Gaussian zero
  mean i.i.d. noise $\epsilon_{F}(t)$, and random initial condition
  $x(0)$. The system states
  $x(t)$ are constrained to a desired admissible state constraint
  set of the form $\XX\mDef\{x\in\RR^{n} | g_x(x) \leq \mOnes{}\}$
  and the inputs $u(t)$ are limited to hard constraints of the form
  $\UU\mDef\{u\in\RR^m | g_u(u) \leq \mOnes{}\}$
  where $\mDefFunction{g_x}{\RR^n}{\RR^{n_x}}$ and
  $\mDefFunction{g_u}{\RR^m}{\RR^{n_u}}$.
  The system is equipped with a time-varying reward signal of the form
  \begin{align}
    r(t, x(t), u(t);\mPa_\mR) &= \ell(t, x(t), u(t);\mPa_\mR) + \epsilon_\mR(t)
      \label{eq:reward_model}
  \end{align}
  where $\epsilon_\mR(t)$ is $\sigma_R$-sub-Gaussian zero mean i.i.d. noise
  and $\mPa_\mR$ parameterizes a mean reward function at time $t\in\mathbb N$.
  %
  In case of perfectly known parameters
  $\mPa_{F}$ and $\mPa_\mR$, the goal is to find a control policy
  $\mDefFunction{\pi}{\mathbb N\times\XX}{\UU}$ such that application of
  $u(t) = \pi(t,x(t))$ maximizes the sum of reward signals
  starting from a given initial condition $x(0)$ over a
  finite horizon of $T$ time steps:
  \begin{align}\label{eq:nominal_objective}
    \max_\pi \quad \mExpectationExp{E}{\sum_{t=0}^{T-1} \ell(t,x(t),u(t);\mPa_\mR)}
  \end{align}
  subject to~\eqref{eq:transition} with
  $E \mDef [\epsilon_{F}(0), .., \epsilon_{F}(T-2)]$.
  Importantly, maximization of~\eqref{eq:nominal_objective} needs to be performed
  while taking into account state and input constraints, i.e.
  $x(t)\in\XX$ and $u(t)\in\UU$ for all $t=0,...,T-1$.

  %

\section{CAUTIOUS BAYESIAN MPC}\label{sec:cautious_bmpc}
  \subsection{Cautious soft-constrained model predictive control}\label{subsec:soft_mpc}
  In the idealized case of perfect system parameter knowledge, an
  approximate policy $\pi$ to maximize~\eqref{eq:nominal_objective}
  can be obtained by repeatedly solving a simplified model predictive
  control (MPC) problem, initialized at the currently measured
  state $x(t)$.
  While MPC formulations vary greatly in their complexity, a simple
  formulation as originally proposed by~\cite{Mitter1966} provides
  sufficient practical properties in terms of performance and
  constraint satisfaction for many applications.
  
  Thereby, we optimize over an input sequence $\{\mUpred_{k|t}\}$ such that the corresponding nominal predicted states ${\mXpred_{k|t}}$ neglecting additive disturbances satisfy the system constraints\re. The resulting MPC problem is given by
  \begin{subequations}\label{eq:mpc_problem}
  \begin{align}
    \mMpcCost{\mPa}{t}(x) \mDef 
      \max_{\{\mUpred_{k|t}\}} &~~ \sum_{k=t}^{T-1} \ell(k, \mXpred_{k|t}, \mUpred_{k|t};\mPa_\mR)-I(\rho_{k|t})\label{eq:mpc_problem_reward} \\
      \text{s.t.} &~~ \mXpred_{t|t} = x,  \label{eq:mpc_problem_initial_condition}\\ 
                  &~~ \text{for all }  k = t,..,T-2: \nonumber \\
                  &~~ \quad \mXpred_{k+1|t} = f(\mXpred_{k|t}, \mUpred_{k|t}; \mPa_F),
                    \label{eq:mpc_problem_transition}\\
                  &~~ \text{for all }  k = t,..,T-1: \nonumber \\
                  &~~ \quad\mXpred_{k|t} \in \bar\XX_{k-t}(\rho_{k|t})
                    \text{ with } \rho_{k|t}\geq 0 \label{eq:mpc_problem_state_constraint}, \\
                  &~~ \quad\mUpred_{k|t} \in \UU,\label{eq:mpc_problem_action_constraint}
  \end{align}
  \end{subequations}
  and can be efficiently solved online based on the current system state $x(t)$ using tailored MPC solvers~\cite[Chapter 8]{Rawlings2017}. Ideally, the prediction horizon $T$ equals the task length, yielding a shrinking horizon MPC. For long task horizons $T$, a common approximation in MPC is to select a smaller prediction horizon and to operate in a receding horizon fashion, see, e.g. \cite[Section 2.2]{hewing2020}.
  %
   Different from~\cite{Wabersich2020}, we propose two modifications of the state constraint formulation~\eqref{eq:mpc_problem_state_constraint}\re . First, we use a tightened state constraint common in MPC for uncertain systems to foster  practical closed-loop constraint satisfaction,  see, e.g.~\cite{marruedo2002input}, for the case of bounded uncertainties. By optimizing state trajectories subject to a tightened state constraint set, i.e. $g_x(x) \leq (1-\delta_{k-t})\mOnes{}$ with non-decreasing $0<\delta_{k-t}\leq 1$ along prediction time steps ${k}$ at time $t$, we gain a safety margin to compensate for uncertain model parameters $\mPa_F$ and unknown external disturbances $\epsilon_F$ before state constraint violation occurs at some time step $t$ in the future, i.e. $g_x(x(t))\nleq\mOnes{}$.
  As a second modification, we soften the tightened state constraint in~\eqref{eq:mpc_problem_state_constraint} and include the extra negative reward term $-I(\rho)$ on the constraint relaxation in~\eqref{eq:mpc_problem_reward} to ensure  recursive feasibility of problem~\eqref{eq:mpc_problem} as similarly done in~\cite{Zeilinger2014}. The penalty  ideally realizes a so-called exact penalty function as proposed by~\cite{Kerrigan2000}.
  The resulting cautious soft-constraint formulation is given as  $\mXpred_{k|t}\in\bar\XX_{k-t}(\rho) := \{x\in\RR^n | g_x(x) \leq  (1-\delta_{k-t})\mOnes{} + \rho\}$ with parameters $\delta_{k-t}\in\RR$, $\delta_{k-t}>0$ defining the degree of cautiousness,  slack variable $\rho \in \RR^{n_x}$, $\rho\geq 0$, and  the constraint violation penalty $I(\rho) = c_{1}^\top \rho + c_{2}\rho^\top \rho$ with penalty weights $c_{1}\in\RR^{n_x}$, $c_1>0$, $c_2\in \RR$, $c_2\geq 0$ \cite{Kerrigan2000}\footnote{ The potentially difficult selection of the penalty weights $c_1$, $c_2$ can be avoided using a `separation of objectives' approach, where~\eqref{eq:mpc_problem} is first maximized w.r.t. $-I(\rho_{k|t})$ to obtain a minimal softening $\{\rho^*_{k|t}\}$, which can be fixed to optimize the expected reward in~\eqref{eq:mpc_problem} in a second optimization problem, allowing to select, e.g., $c_1=\mOnes{}$, $c_2=1$.}. Sufficiently large values of $c_1$ yield an exact penalty, which ensures that~\eqref{eq:mpc_problem} recovers the original hard constrained solution in the convex case whenever possible while guaranteeing feasibility of~\eqref{eq:mpc_problem} for any $x\in\RR^n$ if necessary.
  %
  Using an imperfect estimate $\tilde \mPa=(\tilde \mPa_\mR, \tilde \mPa_F)$ of the
  true system parameters $\mPa \mDef (\mPa_\mR,\mPa_F)$ in the MPC problem~\eqref{eq:mpc_problem},
  we denote the expected closed-loop future reward, including a weighted constraint violation penalty,
  at time $t$ and state $x$ as
  \begin{align}\label{eq:reward}
    \mV{\mPa}{\tilde\mPa}{t}(x)\mDef
      \mExpectationExp{E}{
        \sum_{j=t}^{T-1} r(j, x(j), u(j);\mPa_\mR)-I(\rho(j))
          ~\middle|~(\star)}
  \end{align}
  conditioned on $(\star)$ defined as
  \begin{align*}
      x(t) &= x,  \qquad x(j+1) = f(x(j), u(j); \mPa_F) + \epsilon_F(j), \\
      u(j) &= \mPo(j, x(j);\tilde \mPa), \qquad \rho(j) = \min_{\rho\geq 0}{\rho} \text{ s.t. } x(j) \in \bar\XX_0(\rho),
  \end{align*}
  with $E \mDef [\epsilon_{\mR}(t), .., \epsilon_{\mR}(T-1),\epsilon_{F}(t), .., \epsilon_{F}(T-2)]$
  and $\mPo(j, x(j);\tilde\mPa) \mDef \mUpredOpt_{j|j}(x(j);\tilde\mPa)$,
  being the first element of the optimal input sequence of the MPC
  problem~\eqref{eq:mpc_problem} at time step $j$ with parameters $\tilde\mPa$ and $\rho(j)$ the
  required softening of state constraints.
  \par
  
  Similarly as in previous work on model-based RL, e.g., \cite{Osband2014,Wabersich2020}, the value function \eqref{eq:reward} combines a performance measure for safe trajectories, i.e. $I(\rho(j))=0$ for $x(j)\in\bar\XX_0(0)\subseteq\XX$, with a measure for safety violations $x(j)\notin\XX \Rightarrow \rho(j)>0 \Rightarrow I(\rho(j))\gg 0$. More precisely, the tightened constraints $\bar\XX_0(\rho)$ in $(\star)$, which equal~\eqref{eq:mpc_problem_state_constraint} for $k=t$, ensure that $x(j)\notin\XX \Rightarrow \rho(j)>0$, and will be the main mechanism to derive a sublinear bound on the expected number of unsafe learning episodes in Section~\ref{subsec:unsafe_learning_episodes}. As it will be shown in Theorem~\ref{thm:total_number_of_unsafe_episodes}, the linear tightening $\delta_0>0$ can be selected arbitrarily small with the limiting case $\lim_{\delta_0\rightarrow 0}\bar\XX_0(0)=\XX$. The combined value function~\eqref{eq:reward} can therefore reflect the reward without constraint tightening arbitrarily closely. This concept relates to RL for constrained problems as it ensures the existence of a `strictly safe baseline policy'~\cite[Assumption 2.2]{Liu2021} or a `slater point' \cite[Assumption 2]{Efroni2020} if $\mExpectation{\rho(j)}=0$.

\subsection{Reinforcement learning problem}\label{subsec:rl_problem}
  We consider the case of unknown transition and reward distributions that are parametric according
  to \eqref{eq:transition}, \eqref{eq:reward_model}.
  During each learning episode $e=0,1,..,N-1$ we need to provide a control policy that trades-off information extraction and knowledge exploitation when applied to the system~\eqref{eq:transition} at each time step $t=0,1,..,T-1$ starting from $x(0)$.
  We, therefore, assume access to prior information about the system parameterization such as production tolerances, to be given as $(\mPa_F, \mPa_\mR)\sim\mDistribution{\mPa}$.
  Collected data up to $N$ episodes is denoted by
  \begin{align}\label{eq:data}
    \mData_{N}\mDef 
    \left\{
    \left(  
        t,
        x_{t,e},
        u_{t,e},
        x_{t+1,e},
        r_{t,e}
    \right)_{t=0}^{T-1}
    \right\}_{e=0}^{N-1}.
  \end{align}
  Conditioned on collected data~\eqref{eq:data}, the corresponding
  posterior distribution up to episode $e$ is denoted by
  $\mPa_e \sim \mDistribution{\mPa|\mData_e}$.
  Based on the acquired data over $N$ episodes,
  the performance of the RL algorithm is measured in terms of
  the expected Bayesian cumulative regret
  \begin{align}\label{eq:cumulative_regret}
    \mCre(N)\mDef
      \mExpectationExp{\mPa, \mPa_e, \mData_e}{ \sum_{e=0}^{N-1} \mRe{e} }
  \end{align}
  with episodic regret defined as
  \begin{align}\label{eq:instant_regret}
    \mRe{e} \mDef \mExpectationExp{x}{
      \mV{\mPa}{\mPa}{0}(x)
      -
      \mV{\mPa}{\mPa_e}{0}(x)
    }.
  \end{align}
  Using the notation of the expected future reward in~\eqref{eq:reward},
  the cumulative regret \eqref{eq:cumulative_regret} quantifies
  the expected performance deviation between the MPC-based RL algorithm
  using episodically updated model parameters $\mPa_e$ and the
  optimal MPC-based policy with access to the true
  parameters $\mPa$ of the underlying system.

  \subsection{Cautious Bayesian MPC algorithm}\label{subsec:algorithm}
  Following the concept introduced in \cite{Wabersich2020}, we propose to combine model-based RL using posterior sampling investigated by~\cite{Osband2014} with a cautious model predictive control policy parametrization as described in Section~\ref{subsec:soft_mpc} to obtain a new class of model-based RL policies,  called Cautious Bayesian MPC, which allows to bound the number of unsafe learning episodes.
  At the beginning of each learning episode $e$ we sample transition and reward parameters $\mPa_e$ according to their  posterior distribution that results from the prior distribution $\mDistribution{\mPa}$ together with observed data $\mData_e$.
  \begin{algorithm}[t]
    \SetAlgoLined
      \KwData{Parametric model $f$, $\ell$; Prior $\mDistribution{\mPa}$}
      Initialize $\mData_0 = \emptyset$

      \For{episodes $e = 0, 1, .., N-1$}{
        sample $\mPa_e \sim \mDistribution{\mPa|\mData_e}$


        \For{time steps $t = 0, 1, .., T-1$}{
          apply $u(t) = \mPo(t, x(t);\mPa_e)$

          measure objective and state 
        }

        extend data set to obtain $\mData_{e+1}$
      }
    \caption{Bayesian MPC algorithm}
  \end{algorithm}
  The sampled parameters yield an MPC problem parametrization~\eqref{eq:mpc_problem} that would correspond to a system and reward with true parameters $\mPa_e$. However, since $\mPa_e \neq \mPa$, such an MPC policy might be inconsistent with the underlying system to be controlled, in particular if the posterior parameter variance is large.
  In this case, the sampled policy is likely to cause explorative closed-loop behavior, producing information-rich data.  Compared to using, e.g., the current maximum a-posteriori estimate of the parameters as done in most learning-based MPC approaches, which we refer to as nominal posterior MPC \cite{hewing2020}, the algorithm therefore generates explorative behavior in case of large posterior parameter uncertainties.
  As soon as the task-relevant parameter distributions begin to cumulate around the corresponding true process parameters,
  the parameter samples will start to cumulate as well, implying convergence of the sampled MPC performance to the MPC performance with perfectly known parameters. We provide a rigorous analysis of this effect in Sections~\ref{subsec:regularity_linear_concave} and~\ref{subsec:regularity_nonlinear}, which is one of the key ingredients to obtain a regret bound.
  In addition to performance, if the MPC policy using the true system parameters is capable of ensuring cautious constraint satisfaction in expectation, i.e. if $\mExpectationExp{E}{I(\rho(t))}=0$ for $\delta_0 >0$ under $u(j) = \mUpredOpt(j,x(j),\mPa)$ in~\eqref{eq:reward}, then we can additionally bound the expected number of unsafe learning episodes, i.e. the number of episodes in which the state constraints are violated. As formalized in Section~\ref{subsec:unsafe_learning_episodes}, this can be achieved through a sufficiently large $c_1$ in the exact penalty $I(.)$ in \eqref{eq:reward} together with a bound on the instant regret~\eqref{eq:instant_regret}.
  \begin{remark}
    
    Efficient algorithms for solving the MPC problem~\eqref{eq:mpc_problem} already scale to larger-scale industrial problems, see, e.g., \cite{kumar2021industrial}, which considers 256 states and 32 input dimensions. Furthermore, distributed structures as, e.g., considered in the numerical example in Section~\ref{sec:numerical_examples} or in~\cite{Maestre2014}, can additionally be exploited to implement the corresponding MPC problem \cite{Sturz2020,Hu2020} and the posterior computation and sampling in a scalable distributed fashion.
  \end{remark}


\section{ANALYSIS}\label{sec:analysis}
  Since the Bayesian MPC algorithm can conceptually be used to enhance any existing MPC application for a wide variety parametrized transition and reward functions~\eqref{eq:transition} and \eqref{eq:reward}, we briefly recap general sufficient conditions from \cite{Wabersich2020} in this section to obtain finite-time regret bounds that are based on the framework proposed by~\cite{Osband2014}. Part of the main contribution of this paper  compared to~\cite{Wabersich2020} is to establish that these conditions hold for the relevant case of linear systems in Section~\ref{subsec:regularity_linear_concave} and to provide
  sufficient conditions for the more general nonlinear case in
  Section~\ref{subsec:regularity_nonlinear}.
  These results then enable us together with cautious soft constraints from Section~\ref{subsec:soft_mpc}
  to derive a bound on the expected number of unsafe learning episodes under application
  of the Bayesian MPC algorithm in Section~\ref{subsec:unsafe_learning_episodes}.

  We start by reviewing the main steps of model-based RL based on posterior
  sampling arguments as presented in~\cite{Osband2014} to reformulate
  the regret in terms of the expected learning progress
  of the transition and reward function. By using a regularity assumption on
  the expected future reward under the sampled MPC controllers this then allows
  us to bound the cumulative regret using the so-called Eluder dimension,
  which expresses the learning complexity for different mean and reward function
  classes.
  Instead of the instant regret $\mRe{e}$ in~\eqref{eq:instant_regret},
  which includes the unknown optimal future reward $\mV{\mPa}{\mPa}{0}(x)$,
  we formulate the regret in terms of the sampled MPC
  controller applied to the corresponding sampled system,
  for which it is optimal, i.e.
  \begin{align*}
    \mExpectationExpSmall{\mPa, \mPa_{e}, x, \mData_e }{\mtRe{e}}
      = \mExpectationExpSmall{\mPa, x, \mData_e}{
          \mExpectationExpSmall{\mPa_{e}}{
              \mV{\mPa_e}{\mPa_e}{0}(x)
            -
              \mV{\mPa}{\mPa_e}{0}(x)
            |
            \mPa, x, \mData_e
          }
        },
  \end{align*}
  where $\mV{\mPa_e}{\mPa_e}{0}(x)$ is known based on the sample $\mPa_e$ and
  $\mV{\mPa}{\mPa_e}{0}(x)$ can be observed.
  It holds
  $
    \mExpectationExpSmall{\mPa, \mPa_{e}, x, \mData_e}{\mRe{e} - \mtRe{e}} =0
    \Rightarrow
    \mExpectationExpSmall{\mPa, \mPa_{e}, x, \mData_e}{\mRe{e}} = \mExpectationExpSmall{\mPa, \mPa_{e}, x, \mData_e }{\mtRe{e}},
  $
  which allows another reformulation based on the Bellman operator as proposed by~\cite{Osband2014} for the continuous case to end up with a regret bound of the form
  \begin{align}\nonumber
      \mExpectationSmall{\mtRe{e}}  \leq 
      &\biggl[
          \sum_{t=0}^{T-1}
          \mathbb E\bigg[
            | \mV{\mPa_e}{\mPa_e}{t+1}(f(x(t),u(t);\mPa_e)+\epsilon_F(t)) \\\nonumber
            & \qquad\quad~ - \mV{\mPa_e}{\mPa_e}{t+1}(f(x(t),u(t);\mPa)+\epsilon_F(t))|
          \bigg]
      \biggl]
      \\\label{eq:two_rotated regret terms}
      &+\mExpectationExp{}{
          \sum_{t=0}^{T-1}
          | r(t,x(t), u(t);\mPa_e)-r(t,x(t), u(t);\mPa) |},
  \end{align}
  with expectation over $\mPa$, $\mPa_e$, $x$, $\mData_e$, $\epsilon_F(t)$, and $\epsilon_R(t)$.
  The second term in~\eqref{eq:two_rotated regret terms} can be
  bounded via the conditional posterior through
  $
    \mExpectationExpSmall{\mPa, x, \mData_e }{
      \mExpectationExpSmall{\mPa_{e}}{
          \sum_{t=0}^{T-1}
          | r(t, x(t), u(t);\mPa_e) - r(t, x(t), u(t);\mPa)|
          ~ | ~ \mPa, x, \mData_e
      }
    }.
  $
  The first term, however, requires a regularity assumption that
  quantifies how errors in the expected one-step-ahead prediction
  $f(x(t),u(t);\mPa_e) - f(x(t),u(t);\mPa)$ cause deviations w.r.t.
  the one-step-ahead expected future reward $\mV{\mPa_e}{\mPa_e}{t+1}(.)$:
  \begin{assumption}\label{ass:continuity_v}
    For all $\mPa_e\in \RR^{n_\mPa}$ and $x^+,\tilde x^+\in\RR^n$ there
    exists a constant $L_V>0$ such that
    \begin{align}\nonumber
      &\mExpectationExp{\epsilon_F(t)}{
            | \mV{\mPa_e}{\mPa_e}{t+1}(x^+ + \epsilon_F(t)) 
              - \mV{\mPa_e}{\mPa_e}{t+1}(\tilde x^+ + \epsilon_F(t))|
          } \\
      & ~~ \leq L_V \Vert x^+ - \tilde x^+\Vert _2. \label{eq:ass_continuity_v}
    \end{align}
  \end{assumption}
  While previous literature only required this assumption, we will theoretically investigate its justification in the case of MPC-based policies in Section~\ref{subsec:regularity_linear_concave} and Section~\ref{subsec:regularity_nonlinear}.

  The relationship between the regret and the mean deviation between
  the true and sampled reward and transition function as described previously
  allows us to derive a Bayesian regret bound using statistical measures.
  The first bound relates to the complexity of the respective mean
  function also known as the Kolmogorov dimension $\dim_K$,
  see also \cite{Russo2014}. In an online learning setup it is additionally
  necessary to quantify the difficulty of extracting information and accurate
  predictions based on observed data, which is measured in terms of the
  Eluder dimension $\dim_E$ \cite{Russo2014}. These measures further
  require boundedness of the mean reward and transition function as
  follows.\footnote{ Most real-world systems are bounded, e.g., due to finite energy or limited raw materials, justifying Assumption~\ref{ass:bounded_expected_cost_transition} in practice. A modified LQR setting to match Assumption~\ref{ass:bounded_expected_cost_transition} can be found, e.g., in~\cite[Appendix D]{Osband2014}.}
  \begin{assumption}\label{ass:bounded_expected_cost_transition}
    There exist constants $c_\mR$ and $c_F$ such that
    for all admissible $x\in\RR^n$, $u\in\RR^m$, $\mPa\in\RR^{n_{\mPa}}$,
    and $t=0,1,..,T-1$ it holds
    $|\ell(t,x,u;\mPa_\mR)|\leq c_\mR$, and $||f(x,u;\mPa_F)||\leq c_F$.
  \end{assumption}
  Following~\cite{Osband2014,Wabersich2020}, we can combine these measures
  to obtain the following regret bound for the Bayesian MPC
  algorithm as an immediate consequence from Theorem 1 in~\cite{Osband2014}
  with $\tilde \OO$ neglecting terms that are logarithmic in $N$.
  \begin{theorem}\label{thm:general_regret_bound}
    If Assumptions~\ref{ass:continuity_v} and \ref{ass:bounded_expected_cost_transition}
    hold then it follows that
    \begin{align}\nonumber
      \mCre(N)
      \leq 
      \tilde \OO\bigg(&
        \sigma_R\sqrt{\mKdim(\ell)\mEdim(\ell)T N}
        \\&
        +L_V \sigma_F \sqrt{\mKdim(f)\mEdim(f)T N}
      \bigg).\label{eq:cor_general_regret_bound}
    \end{align}
  \end{theorem}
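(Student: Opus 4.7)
The plan is to use the reformulation of the regret that is already sketched in the text and to reduce the bound to a direct invocation of Theorem 1 of \cite{Osband2014}. First, by the posterior-sampling identity $\mExpectation{\mRe{e}}=\mExpectation{\mtRe{e}}$, it suffices to bound the surrogate regret $\mExpectation{\mtRe{e}}$. Iterating the Bellman operator along the sampled trajectory produces the two-term decomposition already displayed in \eqref{eq:two_rotated regret terms}, separating a one-step value mismatch from a one-step reward mismatch. Assumption~\ref{ass:continuity_v} is precisely what is needed to convert the first term from a value deviation into a norm deviation on the one-step mean prediction, bounding it by
\begin{align*}
L_V \, \mExpectation{\mNorm{2}{f(x(t),u(t);\mPa_e)-f(x(t),u(t);\mPa)}}.
\end{align*}
After this step, both terms in the decomposition have the same structure: an expected sum over $t=0,\dots,T-1$ of the absolute deviation between the true and the sampled mean function evaluated at the realized state-input trajectory.

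Second, I would invoke the concentration-plus-Eluder machinery of \cite{Russo2014,Osband2014}. For each learning episode, sub-Gaussianity of $\epsilon_F$ and $\epsilon_\mR$ combined with Assumption~\ref{ass:bounded_expected_cost_transition} allow the construction of high-probability confidence sets around the posterior predictions whose width at any query point is governed by the Kolmogorov (log-covering) dimension $\mKdim(\cdot)$ of the corresponding function class. Summed along a trajectory of length $T$ over $N$ episodes, the total width is controlled by the Eluder dimension $\mEdim(\cdot)$ via the standard pigeonhole argument of Russo and Van Roy. This yields bounds of the form $\tilde\OO(\sigma_\mR\sqrt{\mKdim(\ell)\mEdim(\ell)TN})$ for the reward-deviation term and $\tilde\OO(\sigma_F\sqrt{\mKdim(f)\mEdim(f)TN})$ for the prediction-deviation term, with the latter carrying the Lipschitz factor $L_V$ inherited from Assumption~\ref{ass:continuity_v}. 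Summing both terms and absorbing the logarithmic confidence-level factors into $\tilde\OO$ reproduces \eqref{eq:cor_general_regret_bound}.

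The main obstacle is essentially bookkeeping rather than conceptual: carefully unwinding the nested conditional expectations over $\mPa$, $\mPa_e$, $x$, $\mData_e$, and $\epsilon_F(t)$ so that the posterior-sampling identity can be applied at each stage, and verifying that the confidence-set construction is compatible with the parametric transition/reward classes \eqref{eq:transition}--\eqref{eq:reward_model}. Once this is done, the proof reduces to citing Theorem~1 of \cite{Osband2014} with $L_V$ from Assumption~\ref{ass:continuity_v} and the boundedness constants $c_F,c_\mR$ from Assumption~\ref{ass:bounded_expected_cost_transition} playing the roles of the regularity constants in that theorem, giving the stated bound.
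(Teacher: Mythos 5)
Your proposal follows essentially the same route as the paper, which itself presents Theorem~\ref{thm:general_regret_bound} as an immediate consequence of Theorem~1 in \cite{Osband2014} after the posterior-sampling identity, the Bellman-operator decomposition in \eqref{eq:two_rotated regret terms}, the use of Assumption~\ref{ass:continuity_v} on the transition term, and the Kolmogorov/Eluder-dimension machinery of \cite{Russo2014,Osband2014}. The steps, the placement of $L_V$, $\sigma_F$, $\sigma_R$, and the final citation all match the paper's argument.
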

  Specific bounds for different parametric function classes can be
  found, e.g., in~\cite{Russo2014,Osband2014}.

\subsection{Regularity of the value function for large-scale linear transitions and concave rewards}
\label{subsec:regularity_linear_concave}
  Regularity of the future reward as required by Assumption~\ref{ass:continuity_v}
  is a central ingredient for the performance analysis and essentially determines
  the shape of the regret bound in Theorem~\ref{thm:general_regret_bound}.
  While explicit bounds on the Kolmogorov- and Eluder dimensions are available
  for relevant parametric function classes~\cite{Russo2014,Osband2014}, we provide
  a bound on $L_V$ that holds under application of the Bayesian MPC algorithm.
  In this section we begin by focusing on the control relevant case of
  linear time-invariant transitions~\eqref{eq:transition} of the form
  \begin{align}\label{eq:linear_transition}
    x(t+1) = A(\theta_F)x(t) + B(\theta_F)u(t) + \epsilon_F(t)
  \end{align}
  and reward models~\eqref{eq:reward} that are either affine or quadratic and concave
  in the states and inputs for each time step $t=0,1,..,T-1$.
  Furthermore, we restrict our attention to state and input spaces
  that are polytopic of the form
  $\XX\mDef\{x\in\RR^n | A_x x \leq b_x\}$ and $\UU\mDef\{u\in\RR^m | A_u u \leq b_u\}$.
  Based on these assumptions we establish Lipschitz continuity of the optimizer
  of the MPC Problem~\eqref{eq:mpc_problem}.
    \begin{theorem}\label{thm:lipschitz_continuity_linear}
      Consider MPC problem \eqref{eq:mpc_problem}. If
      the mean transition~\eqref{eq:mpc_problem_transition} is linear,
      the state~\eqref{eq:mpc_problem_state_constraint} and
      input~\eqref{eq:mpc_problem_action_constraint} constraints are polytopic, and
      the mean reward function~\eqref{eq:mpc_problem_reward} is
      linear or quadratic and strictly concave for all time steps, then
      under application of the Bayesian MPC algorithm it follows
      that Assumption~\ref{ass:continuity_v} holds.
    \end{theorem}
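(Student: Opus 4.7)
The plan is to deduce Assumption~\ref{ass:continuity_v} from Lipschitz continuity of the MPC policy $\mPo(t,\cdot;\mPa_e)$, and then propagate this regularity through the closed-loop rollout defining $\mV{\mPa_e}{\mPa_e}{t+1}$. I would organize the argument into three steps: (i) Lipschitz continuity of the MPC optimizer in the current state, (ii) a backward induction over the horizon showing that each cost-to-go $\mV{\mPa_e}{\mPa_e}{j}(\cdot)$ is Lipschitz, and (iii) passing the Lipschitz estimate through the expectation over $\epsilon_F(t)$.

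For step (i), problem~\eqref{eq:mpc_problem} under the stated hypotheses is a multiparametric convex program with decision variables $(\{\mUpred_{k|t}\},\{\rho_{k|t}\})$, parameter $x$, linear dynamics~\eqref{eq:mpc_problem_transition}, and polytopic constraints~\eqref{eq:mpc_problem_state_constraint}--\eqref{eq:mpc_problem_action_constraint}. The objective is the stage reward minus $I(\rho)=c_1^\top\rho+c_2\rho^\top\rho$; when the reward is strictly concave quadratic the overall objective is strictly concave in all variables and classical multiparametric QP theory yields an optimizer that is continuous and piecewise affine in $x$, hence Lipschitz on any bounded region of interest with constant $L_\pi$. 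In the linear reward case the term $c_2\rho^\top\rho$ with $c_2>0$ preserves strict convexity in $\rho$, while a tie-breaking rule on the remaining linear face (for instance minimum-norm selection) yields a continuous piecewise affine input map by parametric QP theory. In both cases $\mUpredOpt_{t|t}(x;\mPa_e)=\mPo(t,x;\mPa_e)$ is locally Lipschitz in $x$.

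For step (ii), I would proceed by induction on $j$ from $T-1$ down to $t+1$. The Bellman-type one-step expansion writes $\mV{\mPa_e}{\mPa_e}{j}(x)$ as the stage contribution $\ell(j,x,\mPo(j,x;\mPa_e);\mPa_e)-I(\rho(x))$ plus the expectation of $\mV{\mPa_e}{\mPa_e}{j+1}$ at the successor state $A(\mPa_e)x+B(\mPa_e)\mPo(j,x;\mPa_e)+\epsilon_F(j)$. By step (i) the successor state is a Lipschitz function of $x$ with constant $\|A(\mPa_e)\|+\|B(\mPa_e)\|L_\pi$; the minimal slack $\rho(x)=\min\{\rho\geq 0 : x\in\bar\XX_0(\rho)\}$ is Lipschitz in $x$ since $g_x$ is polyhedral, so $I(\rho(x))$ is locally Lipschitz; and the stage reward is locally Lipschitz on the bounded reachable set afforded by Assumption~\ref{ass:bounded_expected_cost_transition}. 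Combining these bounds with the inductive hypothesis yields a finite Lipschitz constant for $\mV{\mPa_e}{\mPa_e}{j}(\cdot)$ assembled as a product-sum of the stage-wise constants; specializing to $j=t+1$ gives the desired $L_V$.

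Step (iii) is immediate, since the Lipschitz estimate from step (ii) holds pointwise in $\epsilon_F(t)$ and therefore survives the expectation in~\eqref{eq:ass_continuity_v}. The main obstacle I expect lies in step (i) for the linear reward case, where non-uniqueness of the input part of the optimizer must be resolved so as to obtain a Lipschitz selection; the strict convexity in $\rho$ contributed by $c_2\rho^\top\rho$ is essential, and one must verify that a continuous piecewise affine selection of optimal inputs exists across all polyhedral regions of the parameter partition. Once this is secured, the rest of the argument is a routine, if somewhat bookkeeping-heavy, propagation of Lipschitz constants through a finite-horizon closed-loop recursion.
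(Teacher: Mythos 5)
Your proposal is correct and follows essentially the same route as the paper: Lipschitz continuity of the MPC optimizer in the initial state via multiparametric LP/QP theory (the paper cites \cite[Thms.~1.8 and 1.12]{borrelli2003constrained}, which also resolve the selection issue you flag for the linear-reward case), propagation of the resulting constant through the closed loop, and the observation that the bound holds pointwise in $\epsilon_F$ so the expectation in~\eqref{eq:ass_continuity_v} is harmless. The only cosmetic difference is that the paper runs a forward induction on the deviation of the closed-loop state trajectories and then sums the stage-cost differences, whereas you run a backward Bellman induction on the cost-to-go; both yield the same product-sum Lipschitz constant~\eqref{eq:lipschitz_continuity_value_function}.
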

    The proof together with a detailed construction of $L_V$ according to
    Asssumption~\ref{ass:continuity_v} can be found in
    Appendix~\ref{app:proof_lipschitz_continuity_linear}.
    
    By combining this result with Theorem~\ref{thm:general_regret_bound}
    and the specific bounds on the Eluder- and Kolmogorov dimensions from~\cite{Russo2014,Osband2014},
    we obtain the specific bound
    \begin{align}\label{eq:specific_regret_bound}
      \mCre(N)
      &\leq
      \tilde \OO\left(
        \sigma_{R}n_\ell\sqrt{2 T N}
        +L_V \sigma_{F} n\sqrt{n(n+m) T N}
      \right)
    \end{align}
    with $L_V$ according
    to \eqref{eq:lipschitz_continuity_value_function} in 
    Appendix~\ref{app:proof_lipschitz_continuity_linear}.
\subsection{Extension towards nonlinear transitions and rewards}\label{subsec:regularity_nonlinear}
  While the linear case as considered in the previous section covers
  a large portion of control applications, the increasing availability and performance
  of nonlinear MPC solvers motivates the extension to nonlinear reward
  and transition models. We therefore extend the analysis from
  Section~\ref{subsec:regularity_linear_concave} to the more general case
  of transition and reward functions $f$ and $\ell$
  that are nonlinear and non-convex as well as more general state and input
  spaces of the form $\XX\mDef\{x\in\RR^n | g_x(x) \leq \mathbb 1\}$ and 
  $\UU\mDef\{u\in\RR^m | g_u(u) \leq \mathbb 1\}$.
  To this end, we use a similar line of reasoning as in the proof of Theorem~\ref{thm:lipschitz_continuity_linear} to provide sufficient conditions on the resulting nonlinear MPC problem~\eqref{eq:mpc_problem} that ensure Assumption~\ref{ass:continuity_v}. In particular, we utilize results from~\cite{Liu1995} to analyze local continuity properties of KKT-based solutions of the MPC problem~\eqref{eq:mpc_problem}, which results in sufficient conditions commonly used in nonlinear optimization algorithms~\cite{wachter2006implementation}, and in the nonlinear MPC literature~\cite[Chapter 8.6.1]{Rawlings2017}\re.
  \begin{theorem}\label{thm:lipschitz_continuity_nonlinear}
    Let Assumption~\ref{ass:bounded_expected_cost_transition} hold and consider
    the MPC problem~\eqref{eq:mpc_problem} with
    $f,\ell,I,g_x$, and $g_u$ continuously differentiable and Lipschitz continuous.
    If the linear independence (LI) and the strong second-order sufficient
    condition (SSOSC) according to \cite[3(a) and 3(d)]{Liu1995} hold for all
    admissible $x\in\RR^n$, $\mPa\in\RR^{n_{\mPa}}$, and $t=0,1,..,T-1$,
    then it follows that Assumption~\ref{ass:continuity_v} holds.
  \end{theorem}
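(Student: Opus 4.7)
The plan is to mirror the strategy of Theorem~\ref{thm:lipschitz_continuity_linear}, replacing the multiparametric quadratic programming sensitivity argument with the nonlinear parametric sensitivity result of~\cite{Liu1995}. Viewing the MPC problem~\eqref{eq:mpc_problem} as a parametric nonlinear program in the initial state $s$ (with $\tilde\mPa$ entering as an additional parameter), the LI and SSOSC assumptions imply that the primal-dual KKT solution, and in particular the first input $\mUpredOpt_{t|t}(s;\tilde\mPa)$, is locally Lipschitz continuous in $s$ at every admissible configuration. Combining this pointwise local property with Assumption~\ref{ass:bounded_expected_cost_transition} — which confines admissible $(x,u,\mPa)$ triples to a bounded set — and a standard compactness/covering argument would yield a uniform Lipschitz constant $L_{\bar u}$ for $\mUpredOpt_{k|k}(\cdot;\tilde\mPa)$ valid across all admissible $s$, $\tilde\mPa$, and prediction indices.

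With a uniform $L_{\bar u}$ in hand, the next step is to propagate Lipschitz continuity through the sampled closed-loop trajectory. Coupling two trajectories under the same noise realization starting from $x^+$ and $\tilde x^+$, the one-step identity $x(j+1) - \tilde x(j+1) = f(x(j),\mPo(j,x(j);\mPa_e);\mPa_{e,F}) - f(\tilde x(j),\mPo(j,\tilde x(j);\mPa_e);\mPa_{e,F})$ together with Lipschitz continuity of $f$ in $(x,u)$ and of the policy $\mPo(j,\cdot;\mPa_e)$ yields, by induction on $j = t+1,\dots,T-1$, a bound of the form $\mNorm{2}{x(j)-\tilde x(j)} \leq L_j \mNorm{2}{x^+-\tilde x^+}$ with constants depending only on $L_f$, $L_{\bar u}$, and $T$.

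Third, I would transfer these trajectory bounds to the expected future reward. The explicit formula $\rho(j) = \max(0,\, g_x(x(j)) - (1-\delta_0)\mOnes{})$ combined with Lipschitz continuity of $g_x$ shows that the minimal slack is Lipschitz in $x(j)$; boundedness of $\rho(j)$ on the admissible set together with $\mathcal C^1$-smoothness of $I$ then makes $I(\rho(j))$ Lipschitz in $x(j)$ as well. Lipschitz continuity of $\ell$ handles the reward term analogously. Summing the single-stage differences over $j = t+1,\dots,T-1$ and taking expectation over $\epsilon_F(j)$ (which, being zero-mean and independent of the coupling, passes through the Lipschitz bound by Jensen's inequality) yields the inequality of Assumption~\ref{ass:continuity_v}, with $L_V$ explicitly constructible from $L_f,\, L_\ell,\, L_{\bar u},\, L_{g_x},\, L_I$, and $T$.

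The main obstacle is the first step: passing from Liu's \emph{local} sensitivity result to a single uniform Lipschitz constant $L_{\bar u}$ that is independent of the base point. The local constants in~\cite{Liu1995} depend on the current active set and on quantitative LI/SSOSC margins, so globalizing them requires arguing that these margins stay bounded away from degeneracy as $(s,\tilde\mPa)$ varies over the compact admissible set. This is exactly where continuous differentiability of $f,\ell,I,g_x,g_u$ together with the \emph{pointwise} assumption that LI and SSOSC hold at every admissible configuration become essential — any breakdown at a boundary or limit point would invalidate the covering step. Once the uniform $L_{\bar u}$ is secured, the remaining propagation, summation, and expectation arguments are routine extensions of the linear case.
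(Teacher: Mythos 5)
Your proposal follows essentially the same route as the paper's proof outline: invoke the sensitivity result of \cite{Liu1995} under LI and SSOSC to get local Lipschitz continuity of the KKT-based optimal input in the initial state, globalize this to a uniform constant $\bar K$, and then propagate through the closed-loop trajectory by induction exactly as in the linear case before transferring to the expected future reward. The only (minor) difference is in the globalization step you correctly flag as the main obstacle: you propose a compactness/covering argument over the admissible set, whereas the paper instead bounds the difference quotient for pairs farther apart than the local radius $r$ by exploiting compactness of the input space, taking $\bar K = \max\{L_{\bar u}, L_u(t,x_0)\}$ — both devices address the same local-to-global gap and neither is worked out in full detail in the paper's outline.
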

  The main steps of the proof can be found in Appendix~\ref{app:proof_lipschitz_continuity_nonlinear}. 
  The linear independence (LI) condition refers to the linear
  independence of the gradients of the active constraints at an
  optimum of~\eqref{eq:mpc_problem} with respect to the decision variables
	and ensures necessity of the corresponding KKT conditions.
	In addition, the strong second-order sufficient condition (SSOSC)
	guarantees sufficiency of the KKT conditions and uniqueness of local solutions
	through a local positive definiteness condition of the Hessian of
	the Lagrangian with respect to the decision variables, also depending on the
	active constraints at the optimum.
  Consequently, if these conditions do not hold,
	situations where small deviations of $x$ cause a `jumping behavior' between
  different local optimal solutions of the MPC
  problem~\eqref{eq:mpc_problem} can occur and potentially lead to a non-Lipschitz
  continuous future reward function.
  While the imposed assumptions are difficult to verify, note that the LI condition is a common requirement for nonlinear solvers and that the SSOSC is the weakest condition to ensure existence and uniqueness of local solutions of the MPC problem~\eqref{eq:mpc_problem} for small perturbations of the initial condition $x$, see \cite{Kojima1980}.  In addition, recent results connect these assumptions and the resulting Lipschitz continuity property to the actual underlying optimal control problems~\cite{dontchev2019lipschitz}. Importantly, note that, e.g., a normally distributed $\epsilon_F$ helps to smoothen the future reward through the expectation operator in~\eqref{eq:ass_continuity_v},  rendering the conditions for Bayesian MPC less restrictive than common assumptions from the nonlinear MPC literature.

\subsection{Bounding the expected number of unsafe learning episodes}
\label{subsec:unsafe_learning_episodes}
  While a tightened MPC formulation using the true parameters $\mPa$ typically provides state constraint satisfaction in expectation for many practical applications, the parameter samples $\mPa_e$ during application of the Bayesian MPC algorithm can vary significantly during initial learning episodes. It can therefore happen that the constraints are violated, even in expectation. In such learning episodes, however, the amount of constraint violation can partially be observed through the regret due to the exact penalty in the future reward function~\eqref{eq:reward}. As a consequence, if the MPC using the true system parameters $\mPa$ provides satisfaction of the
  tightened constraints in expectation, i.e. $x(t)\in\bar\XX_k(0)$, we can use regularity of the expected future reward to show that a converging parameter estimate yields a converging future reward and therefore converging constraint satisfaction. In other words, since the stage cost function is bounded and the constraints are tightened, a sufficiently large soft constraint penalty ensures observability and a bound on state constraint violations, which can be formalized as follows.
  We first derive an upper bound on the instant regret $\Delta_e$ as defined in~\eqref{eq:cumulative_regret} implying constraint satisfaction in Appendix~\ref{app:safety_analysis}. By combining this intermediate result with the regret bound from Theorem~\ref{thm:general_regret_bound} we bound the cumulative expected number of unsafe learning episodes in Theorem~\ref{thm:total_number_of_unsafe_episodes}. To streamline notation we denote the state, input, and slack variable sequence in the expected future reward~\eqref{eq:reward} for a given initial state $x$ as $x_{\tilde\mPa}^\mPa(j)$, $u_{\tilde\mPa}^\mPa(j)$, and $\rho_{\tilde\mPa}^\mPa(j)$ for $j=0,..,T-1$ in the following.
    
  \begin{theorem}\label{thm:total_number_of_unsafe_episodes}
    Let the conditions of Theorem~\ref{thm:general_regret_bound} hold and
    consider a weighting factor in the exact penalty term $I(.)$
    in~\eqref{eq:mpc_problem_reward} that satisfies
    $\min_i(c_{1,i}) \geq \frac{2Tc_R + c_\delta}{\delta_0}$ with $c_R$ from
    Assumption~\ref{ass:bounded_expected_cost_transition} and some $c_\delta >0$.
    If $\mExpectationExp{E,x}{\rho_{\mPa}^\mPa(j)}=0$, then the total number of
    $N_{\mathrm{unsafe}}$ episodes, for which there exists a
    $j \in\mathbb N$, $0\leq j \leq T-1$ such that $\mExpectationExp{E,x}{x_{\mPa_e}^\mPa(j)\notin\XX}$
    is bounded in terms of the cumulative regret by
    $N_{\mathrm{unsafe}}\leq\lceil CR(N)c_\delta^{-1}\rceil$.
  \end{theorem}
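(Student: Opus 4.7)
The plan is to prove the theorem in two moves: first establish a per-episode lower bound $\Delta_e \geq c_\delta$ whenever episode $e$ is unsafe, and then obtain the stated count by summing over the $N$ episodes and dividing by $c_\delta$. The first move is exactly the intermediate result that the text attributes to Appendix safety analysis, and it is the place where the tightening parameter $\delta_0$ and the exact penalty weight $c_{1,i}$ do all of the work.

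For the per-episode bound I would proceed as follows. Because $\mExpectationExp{E,x}{\rho_{\mPa}^{\mPa}(j)}=0$ by hypothesis, $I(\rho_{\mPa}^{\mPa}(j))=0$ in expectation as well, so the true-parameter future reward reduces to an average of stage rewards and Assumption~\ref{ass:bounded_expected_cost_transition} gives $\mV{\mPa}{\mPa}{0}(x)\geq -Tc_R$. Next, I use the tightening structure $\bar\XX_0(\rho)=\{x\in\RR^n\mid g_x(x)\leq (1-\delta_0)\mOnes{}+\rho\}$ together with the definition of the minimal slack $\rho_{\mPa_e}^{\mPa}(j)$ to translate the unsafe event into a coordinatewise lower bound: if $x_{\mPa_e}^{\mPa}(j)\notin\XX$ then $g_x(x_{\mPa_e}^{\mPa}(j))_i>1$ for some $i$, forcing $\rho_{\mPa_e}^{\mPa}(j)_i \geq \delta_0$. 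Feeding this into the linear part of the exact penalty and invoking the standing assumption on $c_{1,i}$ yields
\begin{align*}
I(\rho_{\mPa_e}^{\mPa}(j)) \geq \min_i(c_{1,i})\,\delta_0 \geq 2Tc_R+c_\delta.
\end{align*}
Combined with the stage reward bound $\mExpectationExp{E,x}{\sum_j \ell}\leq T c_R$, this gives $\mV{\mPa}{\mPa_e}{0}(x) \leq Tc_R - (2Tc_R+c_\delta) = -Tc_R - c_\delta$, and subtraction produces $\Delta_e = \mV{\mPa}{\mPa}{0}(x) - \mV{\mPa}{\mPa_e}{0}(x) \geq c_\delta$, as required.

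The aggregation step is then algebraic. Summing $\Delta_e \geq c_\delta\,\mathbbm{1}\{\text{episode $e$ is unsafe}\}$ over $e=0,\dots,N-1$ and taking the expectation present in the definition~\eqref{eq:cumulative_regret} of $\mCre(N)$ gives $\mCre(N) \geq c_\delta\, N_{\mathrm{unsafe}}$, from which $N_{\mathrm{unsafe}} \leq \lceil \mCre(N)\, c_\delta^{-1}\rceil$ follows because $N_{\mathrm{unsafe}}$ is integer valued. The main obstacle I anticipate is the bookkeeping around the several layers of expectations: on one hand the unsafety event in the theorem is stated in terms of the expected trajectory $\mExpectationExp{E,x}{x_{\mPa_e}^{\mPa}(j)\notin\XX}$, while the slack coordinate lower bound $\rho_i\geq\delta_0$ is natural pathwise; the cleanest route is to exploit convexity of the coordinates of $g_x$ to push the expectation through the slack, so that a non-vanishing expected slack coordinate on the unsafe event survives the outer expectation over $\mPa,\mPa_e,x,\mData_e$ and the tight linear penalty weight $c_{1,i}$ still delivers $c_\delta$ after all expectations are taken. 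Once this alignment is verified, combining with Theorem~\ref{thm:general_regret_bound} yields the stated safety guarantee essentially for free.
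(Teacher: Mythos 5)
Your proof is correct and follows essentially the same route as the paper: the key step --- that an unsafe episode forces $I(\rho(\bar j)) \geq \min_i(c_{1,i})\,\delta_0 \geq 2Tc_R + c_\delta$, which dominates the worst-case $2Tc_R$ swing in the accumulated stage rewards and hence yields $\Delta_e \geq c_\delta$ --- is precisely the content of the paper's Lemma~A.1, which the paper establishes by contradiction with a case distinction on the sign of the stage-reward differences, whereas you obtain it by directly bounding the two value functions; the aggregation step (summing the per-episode lower bound over unsafe episodes and invoking the cumulative-regret bound) is identical. Both arguments also share the same implicit reliance on the nonnegativity of the per-episode regret on safe episodes and the same informal treatment of the expectation in the unsafety event, so there is no substantive difference to report.
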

  A sublinear cumulative regret bound therefore ensures a decreasing ratio between the number of episodes with constraint violation and the total number of learning episodes, which vanishes at the rate of $c(1/N)$ for $N \rightarrow \infty$ and some positive constant $c$.
  
  It should be noted that Theorem~\ref{thm:total_number_of_unsafe_episodes} requires that $\min_i(c_i)\geq 1/\delta_0$ holds up to linear factors, which implies that a smaller constraint tightening $\delta_0$ results in a larger constraint violation penalty $I(\rho)$ in~\eqref{eq:reward} by the factor $1/\delta_0$ in case of unsafe episodes. While this does not affect the overall structure of the bound and its dependence on $\sqrt N$ in Theorem~\ref{thm:general_regret_bound}, a practical choice of $\delta_0$ in the related MPC literature is typically between $0.005$ and $0.1$.
  While the slack penalties in~\eqref{eq:reward} can similarly be found in constrained RL concepts~\cite{Wei2021,Liu2021,Efroni2020},  it should be noted that the proposed MPC policy directly minimizes the slacks through the soft constraints~\eqref{eq:mpc_problem_state_constraint} and penalty~\eqref{eq:mpc_problem_reward} and constraint violations only occur due to model uncertainties. In contrast, policy optimization methods~\cite{Wei2021,Liu2021,Efroni2020}, e.g., perform primal and dual update steps after each episode, resulting in separated bounds on performance and constraint satisfaction. 
  Note that the upper bound $c_\delta$ and consequently also the lower bound on the exact penalty scaling could potentially be improved in the corresponding proof (Appendix~\ref{app:safety_analysis}) by exploiting the concrete structure of $I$ including quadratic terms.
  
  Furthermore, for practical applications with large initial parametric uncertainties, one could consider the use of a more restrictive constraint tightening $\bar \delta_0>\delta_0$ during a finite number of initial training episodes to reduce the number of constraint violations within the initial learning phase.

\section{NUMERICAL EXAMPLES}\label{sec:numerical_examples}
  \begin{figure*}
    \centering
    \includegraphics[height=4.55cm]{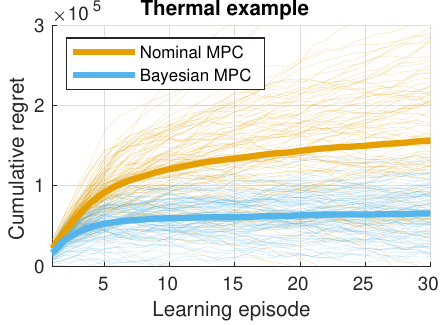}
    \hfill
    \vline
    \hfill
    \includegraphics[height=4.5cm]{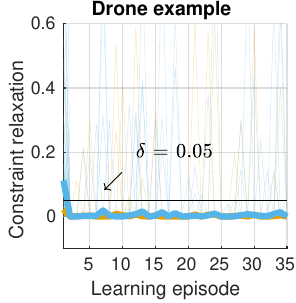}
    \hfill
    \includegraphics[height=4.5cm]{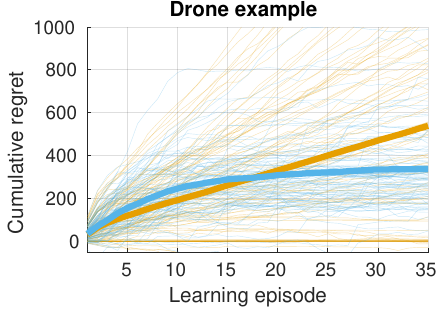}
    \caption{Simulation results of numerical examples for 100 different experiments.
      Thin lines depict experiment samples and thick lines show the corresponding
      mean with brown and blue lines indicating nominal posterior MPC and the
      proposed Bayesian MPC.
      \textbf{Left:} Cumulative regret of the large-scale thermal application
      detailed in Section~\ref{subsec:cooling_example}.
      \textbf{Middle:} Maximum value of $\rho(j)$ as defined in~\eqref{eq:reward}
      over one episode. \textbf{Right:} Cumulative regret of exploration task
      as described in Section~\ref{subsec:drone_example}.}
    \label{fig:example_plots}
  \end{figure*}
  All examples are implemented using the IPOPT solver~\cite{wachter2006implementation}. The source code of the numerical examples can be found online\footnote{http://www.kimpeter.de/wp-content/uploads/2022/09/code.zip}
  \subsection{Large scale thermal application}\label{subsec:cooling_example}
  \begin{figure}
    \centering
    \includegraphics[width=0.85\linewidth]{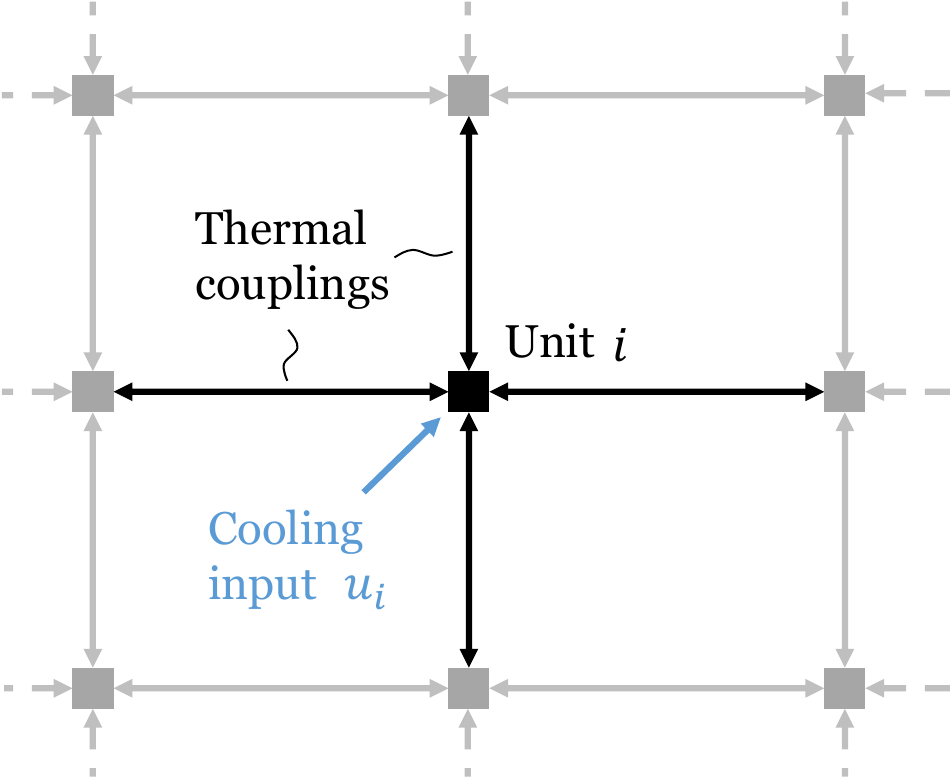}
    \caption{Cooling network structure: Each unit $i$ has a measured
    temperature state $x_i$ and cooling input $u_i$ and is affected by
    the temperature states of the top/down/left/right neighboring units
    arranged in a grid.}
    \label{fig:experiment_cooling}
  \end{figure}
  We first consider the task of efficiently controlling a
  large-scale network with 100 cooling units, e.g. a server farm or
  production machines in manufacturing plants, that are
  arranged in a grid structure and have a strong
  thermal coupling with respect to locally neighboring units,
  see Figure~\ref{fig:experiment_cooling}.
  Inputs $u(t)\in\UU\subset\RR^{100}$ describe the
  applied cooling power to each cooling, which are subject
  to physical limitations $\UU\mDef\{u\in\RR^{100}|u\leq \bar u\}$.
  The system state is defined by the temperatures $x(t)\in\RR^{100}$ of
  each unit that needs to be below a given threshold
  $x(t)\in\XX\mDef\{x\in\RR^{100}|x\leq 100\}$ for all times.
  The thermodynamics of the plant are given with linear mean dynamics
  such that each unit $i$ has unknown dynamics of the form 
  $x_i(t+1)= A_{ii}x(t) + B_{i}u_i(t) + \sum_{j\in N_{s,i}}A_{ij}x_j(t) + C_i + \epsilon_F(t)$ with neighboring
  units indexed by $j$, known Gaussian parameter prior distribution $(A,B,C)\sim\mDistribution{\mPa_F}$
  and Gaussian process noise $\epsilon_F(t)$. Note that the resulting overall
  dynamics can be stated in the form~\eqref{eq:linear_transition} by extending the state space.
  The goal is to minimize the overall expected energy consumption $\sum_i L_i u_i(t) + \epsilon_\mR$
  by considering thermal couplings while keeping the temperature of each cooling unit below
  a specified maximum temperature, starting form a temperature level below $100$
  degrees. The energy efficiency of each unit is described through parameters $L\in\RR^{100}$
  that are sampled from a known Gaussian prior distribution $l\sim\mDistribution{\mPa_\mR}$
  plus additive Gaussian measurement noise $\epsilon_\mR$. 
   Based on these assumptions, i.e., the distributed linear structure of the dynamics and reward in combination with Gaussian prior distribution, each unit $i$ can compute its posterior distribution $\mDistribution{\mPa_i|\mData_{e,i}}$ and corresponding samples independently from other units in closed-form~\cite[Section 3.2.1]{hewing2020}.
  The overall plant consists partly of new cooling units with known efficiency and
  older cooling units with uncertain efficiency factors that are worse in expectation.
  Due to these different efficiency levels that are provided
  through the prior distribution, explorative behavior can be beneficial
  to exploit more efficient units.
  The exact numerical values and prior parametrisation can be found in
  the function $\texttt{server\_experiment.m}$ in the provided source code
  for the example.
  In Figure~\ref{fig:example_plots} (Left), we compare the proposed
  Bayesian MPC algorithm against commonly used nominal posterior
  MPC, i.e. selecting $\theta_e \mDef \mExpectation{\theta|\mData_e}$ 
  \cite[Section 3]{hewing2020}, using 100 different system realizations.
  While both algorithms show reasonable learning performance and provide
  constraint satisfaction at all times, Bayesian MPC is
  able to significantly reduce the cumulative regret by almost $50\%$
  compared to nominal posterior MPC.

\subsection{Drone search application}\label{subsec:drone_example}
  In this example, we consider a generic drone search task falling into the problem class of Section~\ref{subsec:regularity_nonlinear}. The goal is to collect information about an a-priori unknown position of interest using a quadrotor drone. While the prior of the 10-dimensional drone dynamics are selected according to~\cite{bouffard2012board}, we additionally simulate strong winds in different altitudes, which adds strong nonlinear effects to the dynamics. Once the target position is reached, the drone collects information before it returns to the base station for analysis and recharge. The overall goal therefore is to learn the drone dynamics, winds in different altitudes and the most informative search position. The safety-critical constraints are a maximum range of the drone together with a minimum altitude that need to be satisfied under physical actuator limitations.
  The dynamics are of the form $x(t+1)=Ax(t) + Bu(t) + C\Phi(x(t)) + \epsilon_F$,
  where $A$ and $B$ matrices describe the unknown dynamics around the hovering state and 
  and $C\Phi(x(t))$ models strong winds in different altitudes as
  $\Phi(x(t))=[k(x_3(t),\bar W_1),k(x_3(t),\bar W_2),k(x_3(t),\bar W_3)]^\top$
  using radial basis functions $k(.,.)$~\cite{Bishop2006} with given hyper-parameters
  $W_i$ and unknown parameters $C$.
  The system has a three dimensional input space that allows to control the desired pitch
  and roll as well as vertical acceleration of the drone.
  The system constraints are given by physical input constraints and a box constraint
  on the position states, which describes the minimal altitude and maximum range of the drone.
  Furthermore, the use of a linear model is only valid around the hovering state, yielding additional
  absolute pitch and roll constraints of $30~\mathrm{[deg]}$ to the system.
  The reward signal corresponds to the information gained at the final position
  at the end of an episode and is modeled as the sum of equally spread radial-basis-functions
  $\kappa(.,.)$ at positions $p_i\in\RR^3$, $i=1,..,9$ , i.e.
  $r(T-1, x(T-1), u(T-1)) = \epsilon_\mR + \sum_{i=1}^9 \theta_{R,i} \kappa(_{1-3}(T-1), p_i)$.
  The unknown system dynamics parameters $\mPa_F\mDef C$, dynamics process noise $\epsilon_F$,
  reward parameters $\mPa_\mPa\mDef \{p_i\}$, and reward noise $\epsilon_\mR$ are normally distributed  and allow to compute the posterior distribution in closed-form~\cite[Section 3.2.1]{hewing2020}.
  The exact numerical values can be found in the function \texttt{quadrotor\_example.m} in
  the provided source code for the example. 
  By comparing nominal posterior MPC against Bayesian
  MPC over 100 different experiments in terms of expected
  constraint satisfaction, we notice from Figure~\ref{fig:example_plots} (Middle) that
  Bayesian MPC causes explorative behavior during initial episodes,
  which yields higher constraint violations compared to nominal posterior MPC. However,
  this behavior enables safety of future episodes and bounded
  cumulative regret Figure~\ref{fig:example_plots} (Right) compared to
  posterior nominal MPC, which has unbounded cumulative regret.

\section{CONCLUSION}
  In this paper, we combined model predictive control with reinforcement learning based on
  posterior sampling in an episodic setting to efficiently learn an optimal MPC control policy
  for dynamical systems that are subject to state and input constraints.
  Using arguments from multi-parametric optimization, we were able to 
  rigorously bound the learning performance of the resulting Bayesian MPC algorithm
  in the case of linear mean transition functions, concave mean rewards, and polytopic
  system constraints.
  For more general systems that can be described through Lipschitz
  continuous nonlinear functions, we derived sufficient conditions
  for bounded cumulative learning regret using sensitivity analysis results,
  providing insights into the general applicability of Bayesian MPC.
  To account for external disturbances as well as large
  parametric uncertainties during initial learning episodes,
  we introduced softened state constraints that are iteratively
  tightened along the prediction horizon, guaranteeing a sublinear bound
  on the number of expected unsafe learning episodes with state constraint
  violations.
  While the proposed algorithm maintains the online computation load
  of nominal MPC, the advantage in terms of
  learning performance was demonstrated in simulation using a
  large-scale thermal control problem together with a highly nonlinear
  drone application.

\bibliographystyle{unsrt}
\bibliography{bibliography.bib}

\appendix

\section{PROOFS}
  \begin{figure*}[t]
      \vspace{0.2cm}
      \hrulefill
      \vspace{0.2cm}
      \begin{align}
          |\mV{\mPa_e}{\mPa_e}{t+1}(x^+) 
                - \mV{\mPa_e}{\mPa_e}{t+1}(\tilde x^+)|
          =&
          \left|\mExpectationExp{E}{
            \sum_{j=t+1}^{T-1} \tilde\ell(j, x(j,x^+,E), u(j,x(j,x^+,E)))
              - \tilde\ell(j, x(j,\tilde x^+,E), u(j,x(j,\tilde x^+,E)))
          }\right|\nonumber\\
          \leq&
          \mExpectationExp{E}{
            \sum_{j=t+1}^{T-1}
              \left|\tilde\ell(j, x(j,x^+,E), u(j,x(j,x^+,E)))
              - \tilde\ell(j, x(j,\tilde x^+,E), u(j,x(j,\tilde x^+,E)))
              \right|
          }\nonumber\\
          \leq&
          \mExpectationExp{E}{
            \sum_{j=t+1}^{T-1}
              L_{\ell,j}||[x(j,x^+,E)^\top-x(j,\tilde x^+,E)^\top,
                u(j,x(j,x^+,E))^\top-u(j,x(j,\tilde x^+,E))^\top]^\top||
          }\label{eq:app_suff_lipschitz}
        \end{align}
      \vspace{0.2cm}
      \hrulefill
      \vspace{0.2cm}
      \begin{align}
        |\mV{\mPa}{\mPa}{0}(x) - \mV{\mPa}{\tilde\mPa}{0}(x)|
        & \geq \left|\mExpectationExp{E}{\ell_{\mPa}^{\mPa}(\bar j) - \ell_{\tilde\mPa}^{\mPa}(\bar j) + \min_i(c_{1,i})\delta_0 +  
          \sum_{j\in\{0,...,T-1\}\setminus\{\bar j\}} \ell_{\mPa}^{\mPa}(j) - \ell_{\tilde\mPa}^{\mPa}(j) + I(\rho_{\tilde \mPa}^{\mPa}(j))}\right|\nonumber\\
        & \geq \left|\mExpectationExp{E}{(T-1)2c_R + c_\delta + \sum_{j\in\{0,...,T-1\}\setminus\{\bar j\}} \ell_{\mPa}^{\mPa}(j) - \ell_{\tilde\mPa}^{\mPa}(j) + I(\rho_{\tilde \mPa}^{\mPa}(j))}\right| \nonumber\\
        & \geq \left|\mExpectationExp{E}{(T-1)2c_R + c_\delta + \sum_{j\in\mathbb J\setminus\{\bar j\}} \ell_{\mPa}^{\mPa}(j) - \ell_{\tilde\mPa}^{\mPa}(j) + I(\rho_{\tilde \mPa}^{\mPa}(j))
          + \sum_{j\notin\mathbb J\setminus\{\bar j\}} \ell_{\mPa}^{\mPa}(j) - \ell_{\tilde\mPa}^{\mPa}(j) + I(\rho_{\tilde \mPa}^{\mPa}(j))}\right|\label{eq:app_lemma_exp_episodes}
      \end{align}
      \vspace{0.2cm}
      \hrulefill
      \vspace{0.2cm}
    \end{figure*}
  \subsection{Proof of Theorem~\ref{thm:lipschitz_continuity_linear}}
  \label{app:proof_lipschitz_continuity_linear}
    It is sufficient to show global Lipschitz continuity of
    $\mV{\mPa_e}{\mPa_e}{t}(x)$ since existence of an $L_{V_t}>0$ such
    that
    \begin{align*}
      &\mExpectationExp{\epsilon_F}{
          | \mV{\mPa_e}{\mPa_e}{t+1}(x^+ + \epsilon_F) 
            - \mV{\mPa_e}{\mPa_e}{t+1}(\tilde x^+ + \epsilon_F)|
        }\\
      &\leq
      \mExpectationExp{\epsilon_F}{
          L_{V_t}||x^+ + \epsilon_F-\tilde x^+ + \epsilon_F||
        }\\
      &\leq
      L_{V_t}||x^+ -\tilde x^+ ||
    \end{align*}
    implies the desired result.
    Let $\tilde\ell(j,x,u)\mDef \ell(j,x,u;\mPa_e) - I(\rho(x))$ with
    $\rho(x)\mDef \min_{\rho\geq 0}\text{ s.t. }s\in\bar\XX_k(\rho)$ according
    to~\eqref{eq:reward}, which is Lipschitz continuous in $x$ and $u$ since
    $\XX$ and $\UU$ are polytopic.
    To streamline notation we denote the state and input sequence
    in the expected future reward~\eqref{eq:reward} with $\tilde \mPa = \mPa_e$
    as $x(j, x, E)$ and $u(j, x)$ for $j=0,..,T-1$ with $x(0, x, E)=x$ and
    $E \mDef [\epsilon_{\mR}(t), .., \epsilon_{\mR}(T-1),\epsilon_{F}(t), .., \epsilon_{F}(T-2)]$
    in the following.
    We have due to linearity of the expectation operator, Jensen's inequality,
    the triangle inequality, and global Lipschitz continuity of $\tilde\ell$ that
    there exists $L_{\ell,j}$ such that~\eqref{eq:app_suff_lipschitz} holds.
    It therefore remains to show that $x$ and $u$ are Lipschitz continuous in
    their second argument.
    Since $u(j,x)$ is the first element of the optimal input sequence
    according to~\eqref{eq:mpc_problem} and~\eqref{eq:mpc_problem} 
    is guaranteed to be feasible due to the soft-constraint reformulation,
    it follows from \cite[Thm. 1.8]{borrelli2003constrained} for
    affine $\tilde\ell$ and from \cite[Thm. 1.12]{borrelli2003constrained} for
    strictly concave quadratic $\tilde\ell$ for any $j$
    and $x,\tilde x\in\RR^n$ that there exists
    a $\bar K \in \RR^+$, $\bar K <\infty$ such that
    \begin{align}
       ||u(j,x)-u(j,\tilde x)|| \leq \bar K ||x-\tilde x||.\re
    \end{align}

    It remains to show that there exists an $L_x(j)$ such that
    \begin{align}
      ||x(j,x^+,E)-x(j,\tilde x^+,E)||\leq L_x(j) ||x^+-\tilde x^+||.
    \end{align}
    We have that 
    \begin{align*}
      &||x(j,x^+,E)-x(j,\tilde x^+,E)||\leq L_x(j)||x^+ - \tilde x^+||\\
      \Rightarrow & ||x(j+1,x^+,E)-x(j+1,\tilde x^+,E)||\leq L_x(j+1)||x^+ - \tilde x^+||
    \end{align*}
    with $L_x(j)=L_x(j-1)(||A|| + ||B||\bar K)$ and $L_x(0)=1$ by induction.
    
    Induction start $j = 0$:
    \begin{align*}
      ||x(0,x^+,E)-x(0,\tilde x^+,E)||=
      ||x^+ - \tilde x^+||\leq L(0)||x^+ - \tilde x^+||
    \end{align*}
    with $L(0)=1$ implying
    \begin{align*}
      &||x(1,x^+,E)-x(1,\tilde x^+,E)||\\
        =&||Ax^+ + B u(0, x^+) + \epsilon_F(0) -A\tilde x^+ - B u(0, \tilde x^+) - \epsilon_F(0)||\\
        \leq & \underbrace{(||A||+||B||\bar K)1}_{=L_x(1)}||x^+ - \tilde x^+||.
    \end{align*}
    Induction step for any $j>0$ it holds
    \begin{align*}
      ||x(j,x^+,E)-x(j,\tilde x^+,E)&\leq L_x(j)||x^+ - \tilde x^+||.
    \end{align*}
    We have
    \begin{align*}
      &||x(j+1,x^+,E)-x(j+1,\tilde x^+,E)||\\
    =&||Ax(j,x^+,E) + B u(j, x(j,x^+,E)) + \epsilon_F(j) \\
     &~~-Ax(j,\tilde x^+,E) - B u(j, x(j,\tilde x^+,E)) - \epsilon_F(j)||\\
    \leq&(||A||+||B||\bar K)||x(j,x^+,E)- x(j,\tilde x^+,E)||\\
    \leq&\underbrace{(||A||+||B||\bar K)L(j)}_{L_x(j+1)}||x^+ - \tilde x^+||\text{ (induction hypothesis)}.
    \end{align*}
    Combining these results yields
    \begin{align}\label{eq:lipschitz_continuity_value_function}
      L_V = \sum_{j=t+1}^{T-1} L_{\ell,j}(L_x(j)(1 + \bar K))<\infty,
    \end{align}
    which completes the proof.
    \qed

\subsection{Proof outline of Theorem~\ref{thm:lipschitz_continuity_nonlinear}}\label{app:proof_lipschitz_continuity_nonlinear}
  \begin{figure*}[t]
      
    \end{figure*}
    For any initial state $x_0$ there exists a corresponding
    optimal solution $\mUpredOpt_{t|t}$ to~\eqref{eq:mpc_problem} due to
    the soft-constraint formulation, Lipschitz continuity of the objective,
    Lipschitz continuity of the constraints in~\eqref{eq:mpc_problem},
    and the compactness of the input constraints.
    Together with~\cite[Theorem 3.7]{Liu1995} it follows from the given
    assumptions that there exists a unique function
    $y(t,x) \mapsto [\mUpredOpt_{t|t},\mUpredOpt_{t+1|t},...,\mUpredOpt_{t+N-1|t}]^\top$
    that is Lipschitz continuous with respect to all initial conditions $x\in \BB(x_0,r)$
    with $\BB(x_0,r)\mDef\{s\in\RR^n |~||x-x_0||\leq r\}$
    and $x_0$ fulfilling the KKT conditions corresponding to~\eqref{eq:mpc_problem}.
    Due to the SSOSC, the KKT conditions imply optimality of $y(t,x)$ and we
    conclude existence of a local Lipschitz constant $L_u(t,x_0)>0$ such
    that for all $x,~\tilde x\in\BB(r,x_0)$ it holds
    $||u(t,x) - u(t,\tilde x)||=||y_t(x) - y_t(\tilde x)||\leq L_u(t,x_0)||x - \tilde x||$.
    Since the input space is compact it also follows boundedness of
    \begin{align}	
      L_{\bar u} = \max_{x,\tilde x, ||x-\tilde x||>r} \frac{||u(t,x) - u(t,\tilde x)||}{||x-\tilde x||},
    \end{align}			
    allowing us to select $\bar K\mDef\max\{L_{\bar u},L_u(t,x_0)\}$.
    From here we can proceed analogously to the proof of
    Theorem~\ref{thm:lipschitz_continuity_linear} using
    $||x(j+1,x^+,E) - x(j+1,\tilde x^+,E)||\leq
    (L_{fs} + L_{fa}\bar K)||x(j,x^+,E) - x(j,\tilde x^+,E)||$
    with $L_{fx}$ and $L_{fu}$ being the Lipschitz constants of
    $f$ with respect to the state $x$ and input $u$.\qed

\subsection{Bounding the expected number of unsafe learning episodes}\label{app:safety_analysis}
  \setcounter{lemma}{0}
  \renewcommand{\thelemma}{\Alph{subsection}.\arabic{lemma}}
  \begin{lemma}\label{lem:reward_constraint}
    Let Assumption~\ref{ass:bounded_expected_cost_transition} hold.
    Consider the expected future reward in~\eqref{eq:reward} for a constraint tightening
    $\delta_{k-t} > \delta_0 >0$ and $x\in\bar\XX_k(0)$. If $\mExpectationExp{E}{\rho_{\mPa}^\mPa(j)}=0$
    for all $j=0,..,T-1$ and the weighting factor of the exact penalty term $I(.)$
    in~\eqref{eq:mpc_problem_reward} satisfies
    $\min_i(c_{1,i}) \geq \frac{2Tc_R + c_\delta}{\delta_0}$ for some $c_\delta >0$
    then it holds
    \begin{align}
      |\mV{\mPa}{\mPa}{0}(x) - \mV{\mPa}{\tilde\mPa}{0}(x)| < c_\delta
      \Rightarrow \mExpectationExp{E}{x_{\tilde\mPa}^\mPa(j)\in\XX}
    \end{align}
    for all $j=0,1,..,T-1$.
  \end{lemma}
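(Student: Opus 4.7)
The plan is to prove the lemma by contrapositive: assuming there exists $\bar j \in \{0,\dots,T-1\}$ for which $\mExpectationExp{E}{x_{\tilde\mPa}^{\mPa}(\bar j)} \notin \XX$, I would show that $|\mV{\mPa}{\mPa}{0}(x) - \mV{\mPa}{\tilde\mPa}{0}(x)| \geq c_\delta$, which negates the antecedent of the implication.

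The first step is to rewrite the value-function difference in a form where the expected softening penalty of the perturbed closed loop appears cleanly. The hypothesis $\mExpectationExp{E}{\rho_{\mPa}^{\mPa}(j)}=0$, together with the componentwise nonnegativity of $\rho$, forces $\rho_{\mPa}^{\mPa}(j)=0$ almost surely, hence $\mExpectationExp{E}{I(\rho_{\mPa}^{\mPa}(j))}=0$. Combining this with $\mExpectationExpSmall{\epsilon_\mR}{\epsilon_\mR(j)}=0$ and expanding~\eqref{eq:reward} yields the decomposition
\begin{align*}
\mV{\mPa}{\mPa}{0}(x) - \mV{\mPa}{\tilde\mPa}{0}(x) = \mExpectationExp{E}{\sum_{j=0}^{T-1}\bigl(\ell_{\mPa}^{\mPa}(j) - \ell_{\tilde\mPa}^{\mPa}(j)\bigr) + I(\rho_{\tilde\mPa}^{\mPa}(j))}.
\end{align*}

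The second step quantifies the expected penalty at $\bar j$. By the defining relation $\rho \geq 0$ and $g_x(x) \leq (1-\delta_0)\mOnes{} + \rho$, the componentwise bound $\rho_{\tilde\mPa,i}^{\mPa}(\bar j) \geq g_{x,i}(x_{\tilde\mPa}^{\mPa}(\bar j)) - (1-\delta_0)$ holds for every $i$. Taking expectation and exploiting the assumed violation, which provides some index $i^\star$ with $\mExpectationExp{E}{g_{x,i^\star}(x_{\tilde\mPa}^{\mPa}(\bar j))} > 1$, gives $\mExpectationExp{E}{\rho_{\tilde\mPa,i^\star}^{\mPa}(\bar j)} > \delta_0$. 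Since $c_2 \geq 0$ and $\rho \geq 0$ imply $I(\rho) \geq c_1^\top \rho$, the hypothesis $\min_i c_{1,i} \geq (2Tc_R + c_\delta)/\delta_0$ then yields
\begin{align*}
\mExpectationExp{E}{I(\rho_{\tilde\mPa}^{\mPa}(\bar j))} \geq \min_i c_{1,i}\cdot \delta_0 \geq 2Tc_R + c_\delta.
\end{align*}

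Finally I would assemble the bound. Assumption~\ref{ass:bounded_expected_cost_transition} gives $|\ell_{\mPa}^{\mPa}(j) - \ell_{\tilde\mPa}^{\mPa}(j)| \leq 2c_R$, so the $T$ stage-cost differences in the decomposition contribute at least $-2Tc_R$. Combined with $I \geq 0$ at every index $j \neq \bar j$ and the sharp lower bound at $\bar j$ from step two, the identity of step one gives $\mV{\mPa}{\mPa}{0}(x) - \mV{\mPa}{\tilde\mPa}{0}(x) \geq -2Tc_R + 2Tc_R + c_\delta = c_\delta$, completing the contrapositive. The main subtlety throughout is translating `constraint violation in expectation' into a quantitative lower bound on $\mExpectationExp{E}{\rho_{\tilde\mPa}^{\mPa}(\bar j)}$; the linearity of the componentwise inequality $\rho_i \geq g_{x,i}(x) - (1-\delta_0)$ handles this without any convexity assumption on $g_x$, which also motivates the precise form of the lower bound required on $\min_i c_{1,i}$.
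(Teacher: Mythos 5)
Your proposal is correct and follows essentially the same route as the paper's proof: both translate the expected violation at $\bar j$ into $\mExpectationExp{E}{\rho_{i^\star,\tilde\mPa}^{\mPa}(\bar j)}>\delta_0$, lower-bound the penalty by $\min_i(c_{1,i})\delta_0\geq 2Tc_R+c_\delta$, and offset the stage-reward differences by $-2Tc_R$ via Assumption~\ref{ass:bounded_expected_cost_transition}. Your contrapositive phrasing with a single uniform bound is a slightly cleaner packaging of the paper's contradiction argument, which splits unnecessarily into two cases.
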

  \begin{proof}
    For a proof by contradiction, consider the case
    $| \mV{\mPa}{\mPa}{0}(x) - \mV{\mPa}{\tilde\mPa}{0}(x)| < c_\delta$ 
    and $\mExpectationExp{E}{x_{\tilde\mPa}^\mPa(\bar j)\notin\XX}$ for some
    $0\leq \bar j \leq T-1$.
    It holds $\max_i\mExpectationExp{E}{\rho_{i,\tilde\mPa}^\mPa(\bar j)}>\delta_0$ and
    $\mExpectationExp{E}{I(\rho_{\tilde\mPa}^\mPa(\bar j))}\geq
      \min_i(c_{1,i})\max_i\mExpectationExp{E}{\rho_{i,\tilde\mPa}^\mPa(\bar j)}$.
    Next, we derive a lower bound on the absolute expected reward difference
    \begin{align*}
        |\mV{\mPa}{\mPa}{0}(x) - \mV{\mPa}{\tilde\mPa}{0}(x)|
      &= \left|
        \mExpectationExp{E}{
          \sum_{j=0}^{T-1}\ell_{\mPa}^{\mPa}(j) - \ell_{\tilde\mPa}^{\mPa}(j)
          + I(\rho_{i,\tilde\mPa}^\mPa(j))
        }
        \right|
    \end{align*}
    with $\ell_{\tilde \mPa}^{\mPa}(j)\mDef\ell(j, x_{\tilde\mPa}^\mPa(j), u_{\tilde\mPa}^\mPa(j);\theta_\mR)$
    to show the contradiction. We distinguish two cases:
    
    Case $\mExpectationExp{E}{\ell_{\mPa}^{\mPa}(j) - \ell_{\tilde\mPa}^{\mPa}(j)} \geq 0$
    for all $j$:
    It follows directly that $|\mV{\mPa}{\mPa}{0}(x) - \mV{\mPa}{\tilde\mPa}{0}(x)| \geq \min_i(c_{1,i})
    \max_i\mExpectationExp{E}{\rho_{i,\tilde\mPa}^\mPa(\bar j)}
    \geq c_\delta$.

    Case $\mExpectationExp{E}{\ell_{\mPa}^{\mPa}(j) - \ell_{\tilde\mPa}^{\mPa}(j)} < 0$ with $j\in\mathbb J$ for some
    index set $\mathbb J \subseteq\{0,..,T-1\}$ yields~\eqref{eq:app_lemma_exp_episodes},
    where we use linearity of the expectation operator, $\min_i(c_{1,i}) \geq \frac{2Tc_R + c_\delta}{\delta_0}$,
    and the fact that $\ell_{\mPa}^{\mPa}(\bar j) - \ell_{\tilde\mPa}^{\mPa}(\bar j)> - 2c_R$.
    Similarly $\sum_{j\in\mathbb J\setminus\{\bar j\}} \ell_{\mPa}^{\mPa}(j) -
      \ell_{\tilde\mPa}^{\mPa}(j) \geq -|\mathbb J|c_R \geq - (T-1)2c_R$ and
    $I(\rho_{\tilde \mPa}^{\mPa}(j)) \geq 0$ yielding
    \begin{align*}
      |\mV{\mPa}{\mPa}{0}(x) - \mV{\mPa}{\tilde\mPa}{0}(x)| 
      & \geq \left|c_\delta + \sum_{j\notin\mathbb J\setminus\{\bar j\}} \ell_{\mPa}^{\mPa}(j) - \ell_{\tilde\mPa}^{\mPa}(j) + I(\rho_{\tilde \mPa}^{\mPa}(j))\right|\\
      &\geq  c_\delta.
    \end{align*}
    The lower bound implies
    \begin{align*}
      c_\delta > |\mV{\mPa}{\mPa}{0}(x) - \mV{\mPa}{\tilde\mPa}{0}(x)| &\geq c_\delta
    \end{align*}
    yielding the contradiction.
  \end{proof}

\subsection*{Proof of Theorem~\ref{thm:total_number_of_unsafe_episodes}}
  Let $\NN_{\mathrm{unsafe}}$ be the set of episode indices such that
  $\mExpectationExp{s}{|\mV{\mPa}{\mPa}{0}(x) - \mV{\mPa}{\mPa_e}{0}(x)|} > c_\delta$
  for $e\in\NN_{\mathrm{unsafe}}$,
  i.e. potentially unsafe episodes for which there exists some $j$ s.t.
  $\mExpectationExp{E,x}{x_{\mPa_e}^\mPa(j)\notin\XX}$ out of a total of $N$ episodes.
  By summing up potentially unsafe episodes $e\in \NN_{\mathrm{unsafe}}$ we get
  \begin{align}\label{eq:thm_total_number_of_unsafe_episodes}
    \sum_{e\in\NN_{\mathrm{unsafe}}}\mExpectationExp{s}{|\mV{\mPa}{\mPa}{0}(x) - \mV{\mPa}{\tilde\mPa}{0}(x)|}
    \geq N_{\mathrm{unsafe}} c_\delta
  \end{align}
  by Lemma~\ref{lem:reward_constraint} with $N_{\mathrm{unsafe}}$ such that
  $|\NN_{\mathrm{unsafe}}| = N_{\mathrm{unsafe}}$. By definition,
  the cumulative regret provides a bound for the sum in~\eqref{eq:thm_total_number_of_unsafe_episodes} 
  and we therefore end up with $CR(N) \geq N_{\mathrm{unsafe}}c_\delta$, which proves
  the desired statement.\qed

\end{document}